\documentclass{iacrcc}
\license{CC-by}

\usepackage{graphicx}
\usepackage{booktabs}

\newcommand{\Cl}{\mathcal{C}\ell}
\newcommand{\OO}{\mathcal{O}}
\newcommand{\Zq}[1]{\mathbb{Z}_{#1}}
\newcommand{\act}{\star}

\title[running={CRT-Decomposed $\Sigma$-Protocols for CSIDH},
       plaintext={CRT-Decomposed Sigma-Protocols for CSIDH Class Group Actions}]
{CRT-Decomposed $\Sigma$-Protocols for CSIDH Class Group Actions}

\genericfootnote{This paper is under review in IACR Communications in Cryptology}
\addauthor[inst={1}]{I. Dey}
\addauthor[inst={1}]{I. Cherkaoui}
\addaffiliation[country={Ireland}]{South East Technological University, Waterford}

\begin{document}
\maketitle

\keywords{post-quantum cryptography, isogenies, CSIDH, sigma protocols,
proofs of knowledge, QROM, blind signatures}

\begin{abstract}
We construct a zero-knowledge proof of knowledge for the CSIDH group action
that exploits the Chinese Remainder Theorem (CRT) structure of the ideal
class group, available whenever the group structure is exactly known, as it
is for CSIDH-512. The protocol has three proven properties: perfect
completeness, perfect special honest-verifier zero-knowledge, and 2-special
soundness in which the secret is recovered from two accepting transcripts by
one subtraction and one modular inversion per CRT component. No rewinding
loop, lattice reduction, or heuristic sampling appears in the extractor.
Because each round admits exactly two responses, Unruh's transform yields a
non-interactive proof with straight-line extraction in the quantum random
oracle model (QROM), which removes the multiplicative forking-lemma loss. We
instantiate the scheme on the exactly known CSIDH-512 class group, verify
the complete algebraic layer by machine over the exact 258-bit modulus
($10^4$ random instances, all passing), and report protocol-level
simulation results in the exponent model: Monte Carlo soundness rates
matching the proven $2^{-t}$ bound within 95\% confidence at every tested
$t$, serialized signature sizes within 1.6\% of the size formulas,
instrumented action counts, and a scaled meet-in-the-middle attack whose
measured cost follows the predicted $\sqrt{q}$ law. We further prove two
delimiting results: CRT decomposition cannot enlarge the per-round
challenge space, and publishing the CRT hop curves lowers classical
key-recovery cost from about $2^{128.6}$ to about $2^{67.3}$ group action
evaluations. The construction is therefore correct and structurally
complete today, and quantitatively secure only on future parameter sets
whose class number has large prime factors. We compare against CSI-FiSh,
CSI-Otter, and Tanuki; the tight security of a blind signature built on
this proof of knowledge is left to future work.
\end{abstract}

\begin{textabstract}
We construct a zero-knowledge proof of knowledge for the CSIDH group
action that exploits the Chinese Remainder Theorem structure of the ideal
class group, available whenever the group structure is exactly known, as
it is for CSIDH-512. The protocol has perfect completeness, perfect
special honest-verifier zero-knowledge, and 2-special soundness in which
the secret is recovered from two accepting transcripts by one subtraction
and one modular inversion per CRT component, with no rewinding loop,
lattice reduction, or heuristic sampling. Because each round admits
exactly two responses, Unruh's transform yields a non-interactive proof
with straight-line extraction in the quantum random oracle model,
removing the multiplicative forking-lemma loss. We instantiate the scheme
on the exactly known CSIDH-512 class group, machine-verify the complete
algebraic layer over the exact 258-bit modulus, and report protocol-level
simulation results: Monte Carlo soundness rates matching the proven
bound, serialized signature sizes within 1.6 percent of the size
formulas, instrumented action counts, and a scaled meet-in-the-middle
attack following the predicted square-root law. We further prove two
delimiting results: CRT decomposition cannot enlarge the per-round
challenge space, and publishing the CRT hop curves lowers classical
key-recovery cost from about 2^128.6 to about 2^67.3 group action
evaluations. The construction is correct and structurally complete today,
and quantitatively secure only on future parameter sets whose class
number has large prime factors. Comparisons with CSI-FiSh, CSI-Otter, and
Tanuki are given; the tight security of a blind signature built on this
proof of knowledge is left to future work.
\end{textabstract}

\section{Introduction}
A blind signature lets a user obtain a signature on a
message without the signer learning the message. This is the mechanism
behind private digital cash, anonymous credentials, and unlinkable payment
authorization: a bank certifies a coin without seeing where the coin will
be spent. Deployed instantiations rest on discrete logarithms or RSA, which a
large quantum computer breaks; traffic recorded today can be attacked
once such a machine exists, so payment infrastructure with decade-long
lifetimes needs post-quantum replacements now. Isogeny-based group
actions are among the few post-quantum foundations supporting the
Schnorr-like structure blind signatures are built from. The leading
isogeny-based candidates, CSI-Otter~\cite{csiotter} and the Tanuki
frameworks~\cite{tanuki}, are built from interactive identification
protocols. Their security proofs share a weak point: to argue that a forger
must know the secret key, the proof rewinds the forger and applies the
forking lemma~\cite{hkl}, which loses at least a quadratic factor and is
delicate to carry out against quantum adversaries.

When a security proof turns any attacker into a solver
for a hard problem; a lossy transformation forces larger, slower
parameters. Our goal is to remove the lossiest step, rewinding extraction,
using algebraic structure instead. CSIDH~\cite{csidh} is built from a
commutative group, the ideal class group $\Cl(\OO)$, acting on a set of
elliptic curves. For CSIDH-512, and so far only for CSIDH-512, this group
was computed exactly by Beullens, Kleinjung, and Vercauteren~\cite{csifish}:
it is cyclic of a known composite order $N$ whose prime factorization is
known. Any cyclic group of known composite order splits by the Chinese
Remainder Theorem (CRT) into independent components, one per prime factor.
We design the proof of knowledge so that each component is proved, and can
be extracted, independently. Extraction then becomes a short exact formula
rather than a search.

The primary contribution of this paper is five-fold. First, we formalize weak
extractability for cryptographic group actions (Def.~\ref{def:wext}). This gives a precise meaning to a security idea: anyone who can convince a verifier must actually hold the secret key, not merely appear to. Without a formal definition, that idea can be neither proven nor disproven. Second, we construct the protocol $\Pi_{\mathrm{CRT}}$ and prove four
properties: perfect completeness, perfect special honest-verifier
zero-knowledge, 2-special soundness with a purely algebraic extractor, and
exact knowledge error $2^{-t}$ (Thms.~\ref{thm:comp}--\ref{thm:ke}). The
extractor recovers the secret by one subtraction and one modular inversion
per component. No rewinding loop appears. This builds the identification scheme itself and guarantees three things: honest users always succeed, an eavesdropper learns nothing from watching, and the secret can be computed from two correct answers by simple arithmetic, the way a speed follows from two position readings. Comparable schemes instead prove this by re-running the adversary from a saved state, which weakens the guarantee and fails against quantum attackers. Third, we instantiate the
protocol on CSIDH-512. The full algebraic layer is machine-verified over
the exact class number, and protocol-level simulations in the exponent
model confirm the soundness bound, the size formulas, and the attack
scaling law (Sec.~\ref{sec:concrete}). This shows the scheme is not a paper design. Every parameter is fixed for a real deployment target, a computer has checked the mathematics on ten thousand random examples, and simulations confirm the statistical behavior, the exact signature size in bytes, and the predicted cost of the best known attack.

Fourth, we prove two delimiting
results. CRT decomposition cannot enlarge the per-round challenge space
(Prop.~\ref{prop:obstruction}), and publishing the hop curves reduces
classical key recovery from $2^{128.6}$ to $2^{67.3}$ action evaluations
(Prop.~\ref{prop:decomp}). The construction's limits are therefore
quantified, not conjectured. Here we prove the technique cannot make the protocol shorter, and we calculate exactly how much security the design gives up through its published intermediate values, so a practitioner knows the price before paying it. Fifth, we compile the protocol to the QROM
through published theorems (Prop.~\ref{prop:qrom}). The result is a
non-interactive proof of knowledge with straight-line extraction and no
forking-lemma loss. This removes the interaction entirely, yielding a signature whose security argument survives quantum adversaries and does not degrade through the rewinding step that is the standard weak point of such proofs.

The paper is organized as follows. Section~\ref{sec:prelim}
presents notation, explains the group action in elementary terms, and states
the two definitions everything else depends on: known-structure instances
and weak extractability. Section~\ref{sec:protocol} specifies the protocol and proves its four properties, with a physical
interpretation after each theorem explaining what the statement means
outside the formalism. Section~\ref{sec:limits} proves what the
construction cannot do, so that the reader knows the boundary of the claims
before seeing the numbers. Section~\ref{sec:concrete} instantiates the
protocol on CSIDH-512 and presents the three evidence layers: machine
verification of the algebra, the security-decomposition bound with its
consequences, and the protocol-level simulations. Section~\ref{sec:sota}
positions the scheme against published isogeny-based schemes, including the
costs, and Section~\ref{sec:qrom} lifts the interactive protocol to a
non-interactive one in the quantum random oracle model, while
Section~\ref{sec:conc} presents the conclusion.

\section{Preliminaries}\label{sec:prelim}
We first recall the CSIDH group action, then define the two notions the rest of the paper depends on: a
\emph{known-structure instance}, which is the setting in which the CRT
decomposition is available at all, and \emph{weak extractability}, which
makes precise the property our protocol is designed to achieve. 

Let $p$ be a CSIDH prime and $\mathcal{E}$ the set of supersingular
elliptic curves over $\mathbb{F}_p$ with $\mathbb{F}_p$-endomorphism ring
$\OO$. The class group $\Cl(\OO)$ acts freely and transitively on
$\mathcal{E}$~\cite{csidh}; we write the action as
$\act:\Cl(\OO)\times\mathcal{E}\to\mathcal{E}$. Freeness and transitivity
mean that for every ordered pair of curves there is exactly one group element
mapping the first to the second. In this case, curves are positions and group elements are
displacements; applying a displacement is easy, recovering the
displacement between two positions is believed hard (vectorization, or
the group action inverse problem (GAIP)).

\begin{definition}[Known-structure instance]\label{def:ks}
A known-structure CSIDH instance is a tuple
$(p,\OO,g,N,(q_1,\dots,q_k))$ where $\Cl(\OO)=\langle[g]\rangle$ is cyclic
of order $N=\prod_{i=1}^{k}q_i$ with pairwise coprime prime powers $q_i$,
and where $[g^a]\act E$ can be evaluated efficiently for every
$a\in\Zq{N}$, as realized for CSIDH-512 by the lattice-based evaluation
of~\cite{csifish}.
\end{definition}

For $i\in[k]$ set $u_i:=N/q_i$, $v_i:=u_i^{-1}\bmod q_i$, and the CRT
idempotents $\varepsilon_i:=u_iv_i\bmod N$. Every $s\in\Zq{N}$ satisfies
$s\equiv\sum_i s_i\varepsilon_i\pmod N$ with $s_i:=s\bmod q_i$. Let
$h_i:=g^{u_i}$, an element of order $q_i$. Then
\begin{equation}\label{eq:crt}
[g^{s}] \;=\; \prod_{i=1}^{k}\,[h_i^{\,s_iv_i}] .
\end{equation}

\begin{definition}[Weak extractability]\label{def:wext}
A proof system for
$R=\{((E_0,E_1),s): E_1=[g^{s}]\act E_0\}$
is weakly extractable with error $\kappa$ if an extractor, given any prover
that convinces the verifier with probability $\epsilon>\kappa$, outputs $s$
with probability $\mathrm{poly}(\epsilon-\kappa)$ in expected polynomial
time. It is algebraically weakly extractable if the extractor's computation
on accepting transcripts uses only ring operations in $\Zq{N}$: no lattice
reduction, no sampling heuristics, no discrete logarithm computation.
\end{definition}

\section{The Protocol \texorpdfstring{$\Pi_{\mathrm{CRT}}$}{Pi-CRT} and Its Proven Properties}\label{sec:protocol}
We first describe key generation, which decomposes the secret along the CRT
components and publishes a chain of intermediate-hop curves; then the
single-round interaction; and then the four theorems. The order of the
theorems follows the order in which a verifier would care about them:
honest executions succeed (completeness), transcripts leak nothing
(zero-knowledge), successful provers must know the secret and the secret
can be read off algebraically (special soundness), and cheating survives
repetition only with exponentially small probability (knowledge error).
Each theorem is followed by a short interpretation of what it means
physically.

\subsubsection*{Key generation}
On input a known-structure instance and a base curve $F_0:=E_0$, sample
$s\leftarrow\Zq{N}$, set $\sigma_i:=s_iv_i\bmod q_i$, and define the hop
curves
\begin{equation}\label{eq:chain}
F_i \;:=\; [h_i^{\,\sigma_i}]\act F_{i-1},\qquad i=1,\dots,k .
\end{equation}
By~\eqref{eq:crt} the chain ends at $F_k=[g^{s}]\act E_0=:E_1$. The public
key is $(F_0,\dots,F_k)$; the witness is $s$, equivalently
$(\sigma_1,\dots,\sigma_k)$. The prover proves each hop relation
$R_i=\{((F_{i-1},F_i),\sigma_i): F_i=[h_i^{\sigma_i}]\act F_{i-1}\}$.

\subsubsection*{One round of hop $i$}
The prover samples $r\leftarrow\Zq{q_i}$ and sends
$T:=[h_i^{\,r}]\act F_{i-1}$. The verifier sends a challenge bit
$c\in\{0,1\}$. The prover replies $z:=r-c\,\sigma_i\bmod q_i$. The verifier
accepts if and only if
\begin{equation}\label{eq:verify}
T \;=\; [h_i^{\,z}]\act F_{i-1+c}\, .
\end{equation}
The full protocol runs $t$ parallel rounds for each of the $k$ hops.

\begin{theorem}[Perfect completeness]\label{thm:comp}
An honest prover is always accepted.
\end{theorem}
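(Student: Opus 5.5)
The plan is to verify the acceptance equation~\eqref{eq:verify} for each hop $i$ and each challenge bit $c\in\{0,1\}$ separately. Since the full protocol accepts exactly when every one of the $t$ parallel rounds of every hop accepts, and the rounds share no state beyond the public key, per-round correctness implies correctness of the whole protocol. The only tools needed are the group-action axioms for $\act$: the identity acts trivially, and $[a]\act([b]\act E)=[ab]\act E$. We also need that $h_i=g^{u_i}$ has order $q_i$, so that exponents of $h_i$ may be read modulo $q_i$. This is what makes the reduction $z=r-c\sigma_i \bmod q_i$ harmless.

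\emph{Case $c=0$.} Here $z=r$, and the verifier checks $T=[h_i^{\,r}]\act F_{i-1}$. This is the definition of $T$, so the check passes.

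\emph{Case $c=1$.} Here $z\equiv r-\sigma_i\pmod{q_i}$, and the verifier checks $T=[h_i^{\,z}]\act F_i$. We substitute the key-generation relation~\eqref{eq:chain}, $F_i=[h_i^{\,\sigma_i}]\act F_{i-1}$, and use compatibility of the action to obtain
\[
[h_i^{\,z}]\act F_i=[h_i^{\,z}]\act\bigl([h_i^{\,\sigma_i}]\act F_{i-1}\bigr)=[h_i^{\,z+\sigma_i}]\act F_{i-1}.
\]
Because $h_i$ has order $q_i$ and $z+\sigma_i\equiv r\pmod{q_i}$, we get $[h_i^{\,z+\sigma_i}]=[h_i^{\,r}]$. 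The right-hand side is therefore $[h_i^{\,r}]\act F_{i-1}=T$, and the check passes.

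I do not expect a genuine obstacle. The one point that deserves explicit care is the reduction modulo $q_i$. The response $z$ is an element of $\Zq{q_i}$, while the action is defined on exponents of $g$ in $\Zq{N}$. We must therefore note that $[h_i^{\,a}]$ depends only on $a\bmod q_i$, which holds because $h_i^{\,q_i}=g^{N}=1$. Given that remark, the evaluation $[h_i^{\,z}]\act E=[g^{u_iz}]\act E$ is well defined and efficiently computable by Definition~\ref{def:ks}, so the honest prover can also actually compute everything it sends.
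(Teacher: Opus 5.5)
Your proof is correct and follows the paper's argument: the $c=0$ branch is the definition of $T$, and the $c=1$ branch substitutes the chain relation~\eqref{eq:chain} and combines exponents using compatibility of the action. The paper leaves two details implicit that you make explicit: that $[h_i^{\,a}]$ depends only on $a \bmod q_i$ because $h_i^{\,q_i}=g^{N}=1$, and that per-round acceptance gives acceptance of the full protocol.
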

\begin{proof}
For $c=0$, equation~\eqref{eq:verify} is the definition of $T$. For $c=1$,
$[h_i^{\,r-\sigma_i}]\act F_i
=[h_i^{\,r-\sigma_i}]\act([h_i^{\,\sigma_i}]\act F_{i-1})
=[h_i^{\,r}]\act F_{i-1}=T$.
\end{proof}

\begin{theorem}[Perfect special HVZK]\label{thm:zk}
For each hop there is a simulator that, given the hop statement and a
challenge $c$, outputs a transcript with exactly the distribution of a real
one, without knowing the secret.
\end{theorem}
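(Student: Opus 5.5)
The simulator runs the verification equation backwards. Given the hop statement $(F_{i-1},F_i)$ and a challenge $c\in\{0,1\}$, it samples $z\leftarrow\Zq{q_i}$ uniformly. It then sets $T:=[h_i^{\,z}]\act F_{i-1+c}$ and outputs $(T,c,z)$. This uses only the public curves and the efficient evaluation of $[g^a]\act E$ from Definition~\ref{def:ks}; note that $[h_i^{\,z}]=[g^{u_iz}]$. By construction the transcript satisfies~\eqref{eq:verify}, so it is accepting.

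It remains to show that, for each fixed $c$, the simulated pair $(T,z)$ has exactly the distribution of the real pair. I will argue that in both cases $z$ is uniform on $\Zq{q_i}$ and $T$ is a deterministic function of $(c,z)$. Then the joint distributions coincide.

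\textbf{Real transcripts.} The response is $z=r-c\sigma_i\bmod q_i$ with $r$ uniform on $\Zq{q_i}$. For fixed $c$ and $\sigma_i$, the map $r\mapsto r-c\sigma_i$ is a bijection of $\Zq{q_i}$, so $z$ is uniform. By Theorem~\ref{thm:comp}, every honest transcript satisfies~\eqref{eq:verify}. Hence $T=[h_i^{\,z}]\act F_{i-1+c}$ is determined by $z$ and $c$ through the same formula the simulator uses.

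\textbf{Simulated transcripts.} Here $z$ is uniform by definition, and $T$ is given by the same formula. So the two distributions are identical, which gives perfect special HVZK.

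The only subtle point is that $T$ is a function of $z$, which holds because the action is a well-defined map. The argument does not need injectivity of $z\mapsto T$, although freeness together with $h_i$ having order $q_i$ gives it. I would also remark that the argument extends to $t$ parallel rounds and $k$ hops by running the simulator independently per round. This works because the real prover's randomness $r$ is sampled independently in each round.
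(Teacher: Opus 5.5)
Your proposal is correct and is essentially the paper's proof: sample $z\leftarrow\Zq{q_i}$, set $T:=[h_i^{\,z}]\act F_{i-1+c}$, and note that in both real and simulated transcripts $z$ is uniform with $T$ a fixed function of $(z,c)$, so the distributions are identical. Your extra steps make explicit what the paper leaves implicit: the bijection $r\mapsto r-c\sigma_i$, the appeal to Theorem~\ref{thm:comp} to show that the real $T$ satisfies the verification equation, and independent per-round simulation for the parallel version.
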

\begin{proof}
Sample $z\leftarrow\Zq{q_i}$ and set $T:=[h_i^{\,z}]\act F_{i-1+c}$. In a
real transcript $r$ is uniform, so $z=r-c\sigma_i$ is uniform on
$\Zq{q_i}$, and $T$ is the unique curve determined by $(z,c)$
through~\eqref{eq:verify}. Both distributions are uniform in $z$ with $T$ a
fixed function of it, hence identical.
\end{proof}

A transcript is a permanent record. Theorem~\ref{thm:zk} says this record carries exactly zero information
about the secret, not merely computationally hidden information: even an
unbounded observer learns nothing, because an identical record can be
manufactured without the secret. The key's long-term confidentiality thus
rests on no assumption at the transcript level; only the public key
exposes it to attack.

\begin{theorem}[2-special soundness, algebraic extraction]\label{thm:ss}
Two accepting transcripts $(T,0,z_0)$ and $(T,1,z_1)$ for hop $i$ determine
the witness as $\sigma_i=z_0-z_1\bmod q_i$, and the global secret is
\begin{equation}\label{eq:extract}
s \;=\; \sum_{i=1}^{k} \big((z_0^{(i)}-z_1^{(i)})\,v_i^{-1}\bmod q_i\big)\,
\varepsilon_i \;\bmod N .
\end{equation}
Hence $\Pi_{\mathrm{CRT}}$ is algebraically weakly extractable.
\end{theorem}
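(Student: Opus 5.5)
Equate the two verification equations. Since both transcripts share the commitment $T$, \eqref{eq:verify} gives
\[
[h_i^{\,z_0}]\act F_{i-1} \;=\; T \;=\; [h_i^{\,z_1}]\act F_i .
\]
Acting on both sides by $[h_i^{\,-z_1}]$ and using that the action is a group action of the commutative group $\Cl(\OO)$ yields $F_i=[h_i^{\,z_0-z_1}]\act F_{i-1}$. By key generation, also $F_i=[h_i^{\,\sigma_i}]\act F_{i-1}$. Freeness of the action (recalled in Section~\ref{sec:prelim}) then forces $[h_i^{\,z_0-z_1}]=[h_i^{\,\sigma_i}]$ in $\Cl(\OO)$, so $[h_i^{\,z_0-z_1-\sigma_i}]$ is the identity. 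Because $h_i=g^{u_i}$ has order exactly $q_i$ in the cyclic group of order $N$, this gives $\sigma_i\equiv z_0-z_1\pmod{q_i}$. More precisely, $z_0-z_1$ is \emph{a} witness for $R_i$, and it is the unique one modulo $q_i$. This is the only place where freeness and the exact order of $h_i$ matter, and I would state it carefully.

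Next I would reconstruct $s$. By definition $\sigma_i=s_iv_i\bmod q_i$, and $v_i$ is invertible mod $q_i$ since $\gcd(u_i,q_i)=1$. Hence $s_i=\sigma_iv_i^{-1}=(z_0^{(i)}-z_1^{(i)})v_i^{-1}\bmod q_i$. The CRT identity $s\equiv\sum_i s_i\varepsilon_i\pmod N$ stated before \eqref{eq:crt} then gives \eqref{eq:extract}. As a consistency check, I would confirm via \eqref{eq:crt} that the extracted $s$ satisfies $E_1=F_k=[g^s]\act E_0$. This follows because each hop witness is correct and the chain telescopes. So $s$ is a valid witness for $R$, not merely for the individual hops.

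Finally I would verify algebraic weak extractability in the sense of Def.~\ref{def:wext}. The computation above uses only subtractions, the precomputable inverses $v_i^{-1}\bmod q_i$, multiplications and additions in $\Zq{N}$. There is no lattice reduction, sampling, or discrete logarithm, so the "algebraic" clause holds. The main obstacle is the probabilistic part of Def.~\ref{def:wext}: obtaining two accepting transcripts with a common $T$ for \emph{every} hop from a prover with success $\epsilon>\kappa$.

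For that step I would invoke the standard special-soundness-to-extractability argument for binary challenges with $t$ parallel rounds. The knowledge error is $\kappa=2^{-t}$, as in Thm.~\ref{thm:ke}, and the expected number of prover invocations needed to find a colliding accepting pair is $O(1/(\epsilon-\kappa))$. That pair differs in some round of each hop, and the formula above is applied there. Since the theorem itself only asserts the algebraic extraction step, I would keep that reduction brief and cite it, or defer it to Thm.~\ref{thm:ke} and to the straight-line QROM extractor of Prop.~\ref{prop:qrom}.
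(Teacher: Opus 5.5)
Your algebraic argument is correct and matches the paper's proof. You equate the two instances of \eqref{eq:verify} through the common $T$, use freeness together with the order of $h_i$ being exactly $q_i$ to get $\sigma_i\equiv z_0-z_1\pmod{q_i}$, invert $v_i$ modulo $q_i$, and recombine by CRT. Your remarks that $z_0-z_1$ is itself a witness and that the recombined $s$ telescopes to $E_1$ are harmless additions.

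One sentence in your last paragraph is wrong, and you should fix it if you keep that paragraph. A single pair of accepting transcripts with a common commitment and distinct challenge vectors in $\{0,1\}^{kt}$ need not differ in a round of \emph{every} hop. The extremal cheater of Thm.~\ref{thm:ke}, who knows every $\sigma_j$ except $\sigma_i$, produces many such colliding pairs. All of them agree on hop $i$'s challenges, so nothing about $\sigma_i$ is extracted.

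What you need is, for each hop $i$ separately, a pair whose hop-$i$ challenge sub-vectors differ. That is the per-hop rewinding extractor of Thm.~\ref{thm:ke}, and it is why the knowledge error is $2^{-t}$ rather than $2^{-kt}$. The paper's proof of this theorem also establishes only the algebraic clause and leaves the probabilistic part to Thm.~\ref{thm:ke}. Deferring it as you propose is fine, but the justification must be the per-hop one.
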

\begin{proof}
Acceptance gives $[h_i^{\,z_0}]\act F_{i-1}=T=[h_i^{\,z_1}]\act F_i$, so
$F_i=[h_i^{\,z_0-z_1}]\act F_{i-1}$. The action is free, so the exponent is
unique modulo $q_i$ and $\sigma_i=z_0-z_1$. Since $\gcd(v_i,q_i)=1$,
$s_i=\sigma_iv_i^{-1}\bmod q_i$ is well defined, and~\eqref{eq:extract} is
CRT recombination. Every step is a ring operation in $\Zq{N}$.
\end{proof}

Equation~\eqref{eq:extract} says the
secret is literally the difference of two answers to the same question
under the two challenges: knowledge of the key is localized in the
transcripts, the way a displacement is determined by two position
readings. This later converts into straight-line extraction, since a
reduction can read the secret off the adversary's own messages instead of
rerunning it from a saved state, the step that fails against quantum
attackers whose internal state cannot be copied.

\begin{theorem}[Knowledge error]\label{thm:ke}
With $t$ parallel rounds per hop, $\Pi_{\mathrm{CRT}}$ is a proof of
knowledge for $\bigwedge_i R_i$ with knowledge error $2^{-t}$.
\end{theorem}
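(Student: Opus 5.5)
The plan is to reduce the statement to the 2-special soundness of Theorem~\ref{thm:ss} and then use the standard knowledge-error analysis for parallel repetition of a special-sound protocol with binary challenges. Fix a hop $i$. The $t$-fold parallel repetition has first message $\mathbf{T}=(T_1,\dots,T_t)$ and challenge $\mathbf{c}\in\{0,1\}^t$. The first step is to observe that if two accepting transcripts share $\mathbf{T}$ and have $\mathbf{c}\neq\mathbf{c}'$, then some coordinate $j$ has $c_j\neq c'_j$. At that coordinate we hold $(T_j,0,z_0)$ and $(T_j,1,z_1)$, and Theorem~\ref{thm:ss} returns $\sigma_i$. So each hop protocol is special sound in the sense of $\mathbf{c}\neq\mathbf{c}'$ over a challenge space of size $2^t$. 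The conjunction $\bigwedge_i R_i$ is handled by running all $k$ hops under one joint challenge vector $(\mathbf{c}^{(1)},\dots,\mathbf{c}^{(k)})$. Any two accepting joint transcripts with the same first messages and a differing challenge in every hop's block yield all $\sigma_i$, and equation~\eqref{eq:extract} then returns $s$.

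The second step is the extractor itself. I will use the standard argument for special-sound protocols (Damg\aa rd's / Attema--Cramer--Kohl's). Model the prover's success as a binary matrix with rows indexed by its randomness (including $\mathbf{T}$) and columns indexed by challenges, and let $\epsilon$ be the fraction of $1$s. The extractor runs the prover on a random row and column until it obtains an accepting entry. It then resamples fresh challenges in the same row until it finds a second accepting entry with a different challenge, and it caps the number of retries. A row with acceptance fraction $\delta>2^{-t}$ contains a second accepting column. The expected number of queries is $O(1/\epsilon\cdot\epsilon/(\epsilon-2^{-t}))$, and Markov's inequality turns this into success probability $\Omega(\epsilon-2^{-t})$ in expected polynomial time, matching Definition~\ref{def:wext}. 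For the conjunction, I will require collisions in each hop's block. The cleanest route is to treat the hops as separate parallel sub-proofs, each with its own $t$-bit challenge, and to say the knowledge error per hop is $2^{-t}$. A combined prover that succeeds with probability above $2^{-t}$ must in particular succeed above $2^{-t}$ on every hop's marginal, and we extract hop by hop.

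The final step is the tightness remark. A cheater who guesses $\mathbf{c}$ in advance and simulates each round via Theorem~\ref{thm:zk} succeeds with probability exactly $2^{-t}$, so the bound cannot be improved. The main obstacle is the conjunction. A joint challenge space of $2^{kt}$ elements has collision structure that is not simply ``differ somewhere''; it must differ in every block. The naive joint bound would therefore be roughly $k2^{-t}$ rather than $2^{-t}$. I would first try to avoid this by the per-hop marginal argument above, rewinding on one block's challenge while keeping the others fixed. If that fails to give exactly $2^{-t}$, I would state the error for each $R_i$ separately and note that the AND extractor loses at most a factor $k$ in running time, not in the error.
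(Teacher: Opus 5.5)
Your proposal is correct and follows essentially the paper's own argument. Two accepting challenge vectors for the same commitments of hop $i$ differ in some round, where Thm.~\ref{thm:ss} yields $\sigma_i$; overall success $\epsilon>2^{-t}$ lower-bounds each hop's marginal success, and the standard per-hop rewinding extractor then runs in expected time $\mathrm{poly}/(\epsilon-2^{-t})$. Absorbing the other hops' challenge blocks into the row, as you propose, already gives knowledge error exactly $2^{-t}$ rather than $k2^{-t}$, so the fallback in your last paragraph is not needed.
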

\begin{proof}[Proof sketch]
The verifier accepts only if all $kt$ rounds accept. A prover that can
answer both challenges of some round of hop $i$ yields $\sigma_i$ by
Thm.~\ref{thm:ss}. A prover that can answer at most one challenge per round
of hop $i$ passes that hop with probability at most $2^{-t}$, and failing
one hop fails the whole proof. Any prover with success probability above
$2^{-t}$ therefore admits, for each hop, a round in which both challenges
are answerable with noticeable probability, and the standard per-hop
rewinding extractor runs in expected time
$\mathrm{poly}/(\epsilon-2^{-t})$.
\end{proof}

Each round is an independent
challenge that a prover without the key answers correctly with probability
at most one half, like a coin flip that must land favorably. After
$t=128$ rounds the probability of surviving by luck is below $2^{-128}$,
smaller than the probability of guessing a 128-bit key outright, which is
the standard bar for cryptographic soundness.

\begin{remark}[What the CRT structure buys]
The benefit is structural, not statistical. Extraction is component-wise
and closed-form~\eqref{eq:extract}; components can be proved, extracted,
delegated, or aggregated independently, which is the property needed for
distributed provers and for the online extraction of
Sec.~\ref{sec:qrom}. The benefit is not compactness and not a smaller
number of rounds, as the next section makes precise.
\end{remark}

\section{Limits of the Approach}\label{sec:limits}
Before presenting numbers, this section proves two negative results that
delimit what the CRT structure can and cannot deliver, so that the
positive claims of the paper are read at their correct strength. The first
shows that the decomposition cannot shorten the protocol. The second,
developed quantitatively in Section~\ref{sec:concrete}, identifies the
information cost of publishing the hop curves.

\begin{proposition}[No challenge-space amplification]\label{prop:obstruction}
Fix a round of hop $i$ in which the verifier's check has the form
$T=[h_i^{\,z}]\act X_c$ for a curve $X_c$ the verifier possesses, and in
which verification performs $O(1)$ action evaluations. Then the challenge
space is at most as large as the set of published candidate curves $X_c$.
Replacing binary challenges by challenges in $\Zq{m}$ requires publishing
$m$ curves per hop or performing $\Theta(m)$ action evaluations at
verification. CRT decomposition does not change this.
\end{proposition}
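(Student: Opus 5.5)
The plan is to treat the verifier as a deterministic procedure. On input $(T,c,z)$ it evaluates $O(1)$ group actions on curves it already holds. For each challenge value $c$ we then trace which base curve $X_c$ the check compares against, and count how many distinct base curves there can be.

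\textbf{Step 1: injectivity of the challenge-to-curve map.} Suppose two distinct challenges $c\neq c'$ use the same base curve, $X_c=X_{c'}$. Then two accepting transcripts $(T,c,z)$ and $(T,c',z')$ satisfy $[h_i^{\,z}]\act X_c=[h_i^{\,z'}]\act X_c$. By freeness of the action (as in the proof of Thm.~\ref{thm:ss}) this forces $z\equiv z'\pmod{q_i}$, so the two transcripts carry no information about $\sigma_i$. Worse, a prover without the witness who fixes $z$ and sets $T:=[h_i^{\,z}]\act X_c$ answers both challenges. The challenges $c$ and $c'$ therefore add no soundness, and we may identify them. After this identification, $c\mapsto X_c$ is injective, so the effective challenge space has size at most the number of distinct curves $X_c$.

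\textbf{Step 2: bounding the number of candidate curves.} Each $X_c$ is either published, or derived at verification time from published curves by some number of action evaluations. With $O(1)$ evaluations, the verifier can derive only $O(1)$ curves from a fixed published set per hop. For binary challenges, the curves $F_{i-1}$ and $F_i$ suffice. For challenges in $\Zq{m}$, the natural candidates are $X_c=[h_i^{\,c\sigma_i}]\act F_{i-1}$, since the verifier cannot compute $h_i^{\,c\sigma_i}$ without $\sigma_i$. These curves must either all be published ($m$ curves), or be reached by iterating the hop $F_{i-1}\to F_i$ action, which takes $\Theta(m)$ evaluations. Either way, publishing $m$ curves or spending $\Theta(m)$ evaluations is necessary.

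\textbf{Step 3: CRT decomposition does not help.} The argument above is carried out inside the order-$q_i$ subgroup $\langle h_i\rangle$ and uses only freeness of the action. It therefore applies verbatim to each hop. Splitting $\Cl(\OO)$ via~\eqref{eq:crt} only multiplies the number of hops; it does not enlarge the challenge set of any single hop.

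\textbf{Expected obstacle.} The delicate step is the ``$\Theta(m)$ evaluations'' claim in Step 2. It implicitly assumes that iterating a known displacement is the only way to reach $[h_i^{\,c\sigma_i}]\act F_{i-1}$, and in general this would require a hardness assumption (a generic group action model). I would make the statement precise in that model: the verifier can only apply known elements of $\langle h_i\rangle$ and the public hop displacements. In that setting, reaching $c\sigma_i$ with $O(1)$ operations is impossible unless the curves are already in the public key. The proposition should then be read as holding relative to this model, rather than unconditionally.
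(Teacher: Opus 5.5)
Your overall route is the paper's. The base curve $X_c$ is tied to $c\sigma_i$, the verifier has no shortcut to it, and everything happens inside one hop, so CRT only multiplies the number of hops. Your Step~1 is a correct addition that the paper leaves implicit.

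Step~2, however, does not hold as written, in two places.
\begin{itemize}
\item You call $X_c=[h_i^{c\sigma_i}]\star F_{i-1}$ the ``natural candidates'', but your necessity claim needs them to be \emph{forced}. That is exactly the paper's one-line argument: with $T=[h_i^{r}]\star F_{i-1}$ and $z=r-c\sigma_i$, the check $T=[h_i^{z}]\star X_c$ together with freeness gives $X_c=[h_i^{c\sigma_i}]\star F_{i-1}$.
\item More seriously, the sentence ``with $O(1)$ evaluations the verifier can derive only $O(1)$ curves'' is false, because the evaluations may depend on $c$. For example, $X_c:=[h_i^{c}]\star F_{i-1}$ costs one evaluation per challenge and yields $m$ distinct curves. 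So Step~1 plus this count does not bound the challenge space by the number of published curves.
\end{itemize}

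The repair is to extend Step~1 from equal base curves to base curves at a publicly known displacement. Suppose $X_{c'}=[h_i^{e}]\star X_c$ with $e$ known, as happens whenever both are derived from the same published curve by applying known elements. Then a cheater sending $T=[h_i^{z}]\star X_c$ answers $c$ with $z$ and $c'$ with $z-e$. Hence the effective challenge space is at most the number of published curves. This holds with no hardness assumption and for any response rule, as long as the verifier only applies group elements it knows. That restriction is what excludes non-action shortcuts such as the quadratic twist behind the ternary challenges in Fig.~\ref{fig:soundness}.

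The same observation corrects your model in the last paragraph. The displacement $h_i^{\sigma_i}$ is not public, so the verifier cannot ``iterate the hop $F_{i-1}\to F_i$'' at all. Obtaining $[h_i^{2\sigma_i}]\star F_{i-1}$ from $(F_{i-1},F_i)$ alone is a squaring Diffie--Hellman-type problem for the action. The paper's phrase ``applying the action $c$ times'' is equally loose. In the known-element model, publishing is the only option, and the $\Theta(m)$-evaluation branch presupposes an oracle for $[h_i^{\sigma_i}]$ that only the prover has.
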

\begin{proof}
Correctness for challenge $c$ and response $z=r-c\sigma_i$ forces
$X_c=[h_i^{\,c\sigma_i}]\act F_{i-1}$. A group action, unlike
exponentiation in a group, gives the verifier no way to compute this curve
from $F_{i-1}$, $F_i$, and $c$ other than applying the action $c$ times or
receiving the curve directly.
\end{proof}

\subsubsection*{Result 3: protocol-level simulation}
We implemented the complete scheme in this model,
including the Unruh compilation with a SHA-256 random oracle and byte
serialization, and ran four simulations. (S1) \emph{Soundness}: the
extremal cheating prover of Thm.~\ref{thm:ke}, which knows every
component but one, was run for $4\times10^{5}$ independent protocol
executions at each $t=1,\dots,12$; the measured success rate matches the
proven $2^{-t}$ bound within the Wilson 95\% confidence interval at every
$t$ (points with error bars in Fig.~\ref{fig:soundness}). (S2)
\emph{Sizes}: serialized signatures measure within $1.6\%$ of the size
formulas at all tested $(t,k)$, the small excess coming from rounding
each response to whole bytes (triangular markers in
Fig.~\ref{fig:sizes}). (S3) \emph{Counts}: instrumented action counts
equal $tk$ for signing and for verification exactly, grounding the
projections of Fig.~\ref{fig:time} in measured counts rather than
formulas alone. (S4) \emph{Attack}: the meet-in-the-middle adversary of
Prop.~\ref{prop:decomp} was implemented in toy subgroups of order
$2^{10}$ to $2^{22}$, thirty runs each; the measured mean cost follows
the predicted $1.5\sqrt{q}$ evaluations (table plus expected half probe)
and never exceeds the $2\sqrt{q}$ worst case
(Fig.~\ref{fig:attack}). We state the scope plainly: these simulations
validate the protocol logic, the exact statistical behavior, the byte
formats, and the attack scaling law; by construction they measure neither
isogeny running time nor cryptographic hardness, and all wall-clock
figures elsewhere remain projections from the cited 40 ms/action
measurement.

\begin{figure}[t]
\centering
\includegraphics[width=0.66\textwidth]{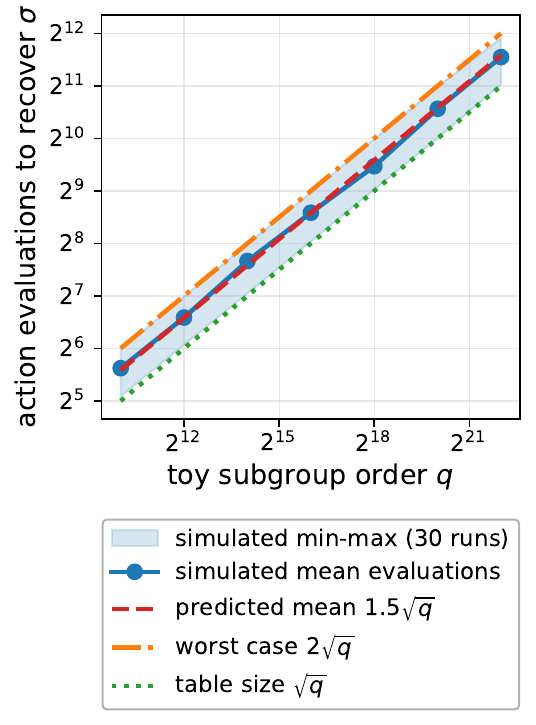}
\caption{Simulated meet-in-the-middle attack of Prop.~\ref{prop:decomp}
in toy subgroups (exponent model): measured mean action evaluations over
30 runs with min-max band, versus the predicted mean $1.5\sqrt{q}$, the
worst case $2\sqrt{q}$, and the $\sqrt{q}$ table size. Provenance:
measurement of this paper's released simulator. The simulation validates
the $\sqrt{q}$ scaling law of the security-decomposition bound; it makes
no claim about isogeny running time.}
\label{fig:attack}
\end{figure}

\begin{figure}[t]
\centering
\includegraphics[width=0.72\textwidth]{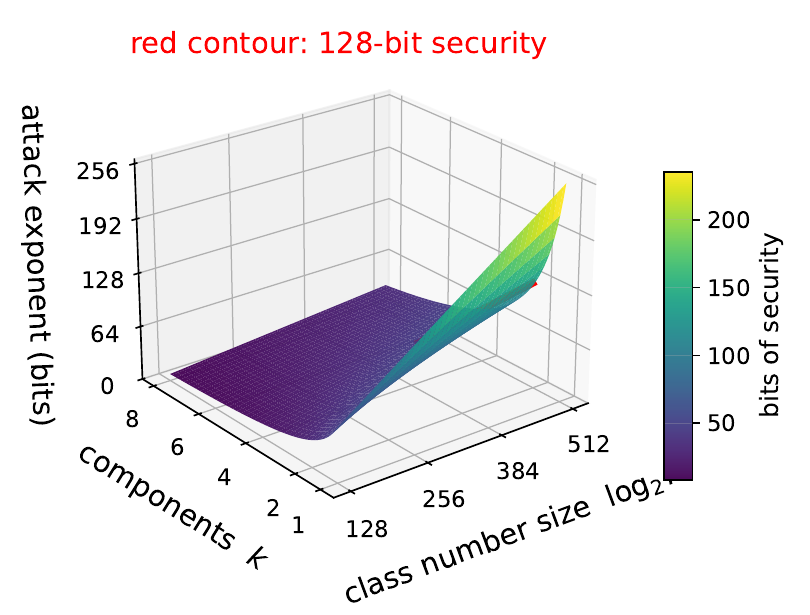}
\caption{Classical meet-in-the-middle attack exponent as a function of
class number size $\log_2 N$ and number of equal-size CRT components $k$,
$z=\log_2 N/(2k)$. Provenance: exact formula from Prop.~\ref{prop:decomp};
unequal splits such as CSIDH-512 are governed by the largest component.
The red contour marks 128-bit security: $k=2$ needs $\log_2 N\ge512$ and
$k=5$ needs $\log_2 N\ge1280$.}
\label{fig:surface}
\end{figure}

\section{Comparison with the State of the Art}\label{sec:sota}
This section answers the question a practitioner asks first: what does
this scheme cost relative to what exists, and what does the cost buy? The
table gives the published sizes of prior isogeny-based schemes next to the
parameter-determined sizes of ours, and the figures show the trade-offs
across the round parameter. The section is deliberately written so that
the disadvantages appear with the same prominence as the advantages.

Table~\ref{tab:sota} compares the instantiated scheme with published
isogeny-based schemes at the CSIDH-512 parameter level. The published
figures for prior schemes come from their papers~\cite{csifish,csiotter};
figures for this work are parameter arithmetic from Sec.~\ref{sec:concrete}
(no isogeny implementation exists yet, and we present no measured signing
times of our own). Figures~\ref{fig:sizes} and~\ref{fig:time} plot the
underlying trade-offs across the round parameter $t$.

\begin{figure}[t]
\centering
\includegraphics[width=0.8\textwidth]{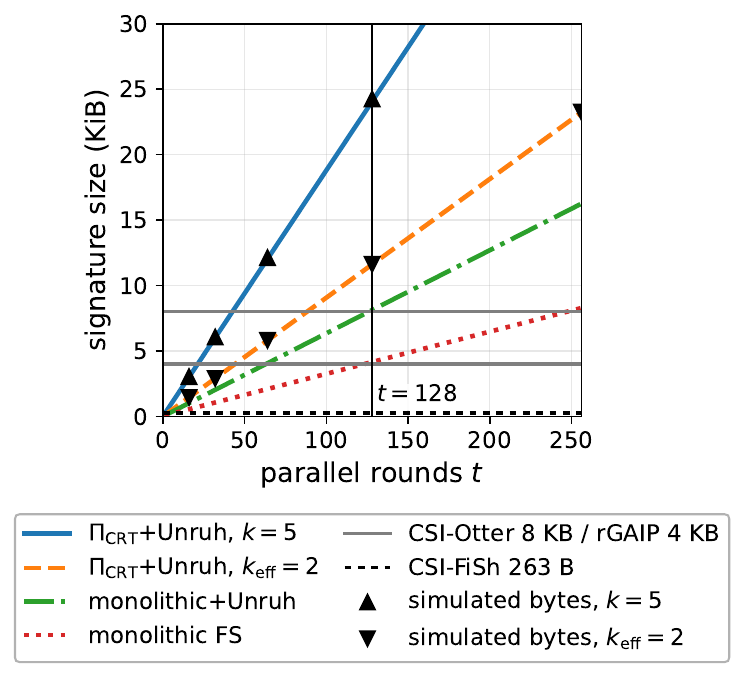}
\caption{Signature size versus rounds $t$: this work with $k=5$ and
$k_{\mathrm{eff}}=2$ under the Unruh transform, the monolithic
known-structure protocol with Unruh, and monolithic plain Fiat-Shamir.
Horizontal reference lines mark the published CSI-Otter (8 KB and 4 KB)
and CSI-FiSh (263 B) sizes. Provenance: exact size formulas and published values; triangular
markers are byte lengths of signatures serialized by the released
simulator, within 1.6\% of the formulas (byte-alignment of responses).
Online QROM extraction costs roughly 3 to 6 times CSI-Otter's size at
equal round count.}
\label{fig:sizes}
\end{figure}

\begin{table}[t]
\caption{Comparison at the CSIDH-512 level. Sizes for prior schemes are the
published values; sizes for this work are exact parameter arithmetic;
projected times use the published 40 ms per group action~\cite{csifish}.
PoK: proof of knowledge. OMUF: one-more unforgeability.}
\label{tab:sota}
\centering\small
\begin{tabular}{@{}lllll@{}}
\toprule
Scheme & Type & pk & sig & Extraction model\\
\midrule
CSI-FiSh~\cite{csifish} & signature & large & 263 B & FS, ROM\\
CSI-Otter~\cite{csiotter} & blind sig & 128 B & 8 KB & rewinding, ROM\\
CSI-Otter rGAIP~\cite{csiotter} & blind sig & 512 B & 4 KB & rewinding, ROM\\
Tanuki~\cite{tanuki} & blind sig & varies & varies & ROM, concurrent\\
This work, $k=5$ & PoK/sig & 384 B & 24.1 KiB & online, QROM\\
This work, $k_{\mathrm{eff}}=2$ & PoK/sig & 384 B & 11.6 KiB & online, QROM\\
\bottomrule
\end{tabular}
\end{table}

\begin{figure}[t]
\centering
\includegraphics[width=0.8\textwidth]{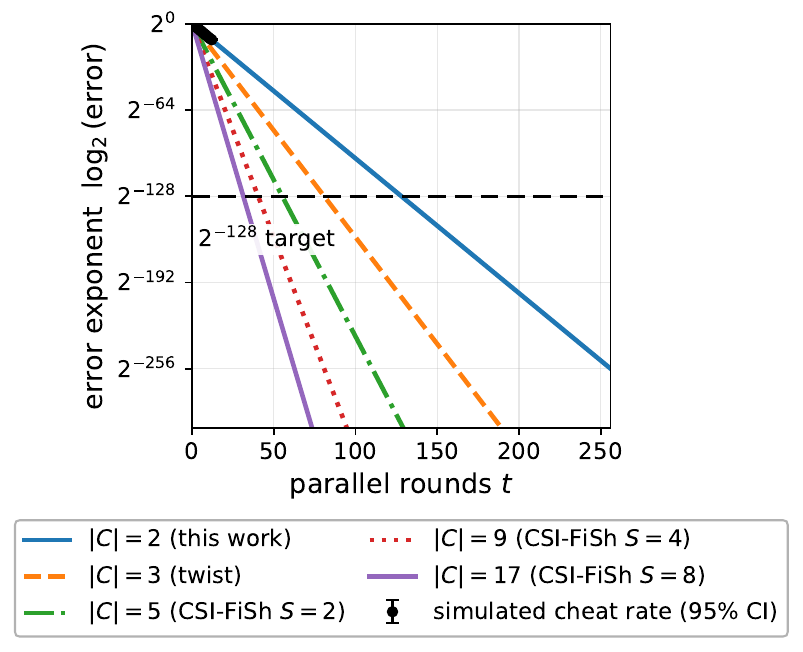}
\caption{Knowledge/soundness error versus parallel rounds $t$ for five
per-round challenge-set sizes: binary (this work and the CSI-FiSh base
protocol), ternary via twists, and CSI-FiSh with $S=2,4,8$ public keys
(challenge set $2S+1$). Provenance: exact formula $(1/|C|)^t$; black points are simulated
cheating success rates ($4\times10^{5}$ runs per point, 95\% confidence
intervals), which match the binary-challenge line at every tested $t$.
All curves need on the order of $t\approx128$ binary rounds to reach the
$2^{-128}$ line; larger challenge sets buy a constant factor at the cost
of larger keys (Prop.~\ref{prop:obstruction}).}
\label{fig:soundness}
\end{figure}

In a classical Schnorr protocol the
verifier checks a large challenge with one exponentiation, because group
elements have composable coordinates. Curves under a group action do not:
the only way to move $c$ steps through the isogeny graph is to take $c$
steps. The verifier is a walker without a map shortcut, so large
challenges are physically unverifiable in one operation, and every
group-action protocol pays for soundness in rounds. This is why CSI-FiSh
enlarges the challenge set only by publishing more keys and CSI-Otter only
by the quadratic twist, each a small constant factor per
round~\cite{csifish,csiotter}. Figure~\ref{fig:soundness} quantifies the
consequence: all curves need on the order of a hundred rounds to reach the
$2^{-128}$ line, and CRT decomposition moves none of them.

\begin{remark}[Leakage of hop curves]\label{rem:leak}
The interior curves $F_1,\dots,F_{k-1}$ are part of the statement. Each
pair $(F_{i-1},F_i)$ is a standalone vectorization instance in the subgroup
$\langle h_i\rangle$. Small components are recoverable outright and must be
treated as public. The exact security consequence is
Prop.~\ref{prop:decomp}.
\end{remark}

\section{Concrete Instantiation on CSIDH-512 and Verified Results}
This section turns the abstract protocol into a fully specified object and
supplies three layers of evidence for it. Result 1 is a machine
verification of the algebraic layer over the exact class number. Result 2
is the security-decomposition bound, the paper's central honest finding
about the cost of the design, together with its deployment consequences.
Result 3 reports protocol-level simulations, with the scope of what
simulation can and cannot establish stated explicitly.

\label{sec:concrete}
The only parameter set satisfying Def.~\ref{def:ks} today is CSIDH-512.
The class group is cyclic, generated by $[\langle 3,\pi-1\rangle]$, of
order~\cite{csifish}
\begin{align*}
N ={}& 3\cdot 37\cdot 1407181\cdot 51593604295295867744293584889\\
     &\cdot\, 31599414504681995853008278745587832204909
     \;\approx\; 2^{257.14},
\end{align*}
a product of $k=5$ distinct primes of bit lengths $2,6,21,96,135$. All
protocol data $u_i,v_i,\varepsilon_i$ are therefore fully determined.

\subsubsection*{Result 1: Machine-verified Algebraic Layer}
Because the action is free and transitive, the curve set is a
$\Zq{N}$-torsor: fixing $E_0$, every curve equals $[g^{a}]\act E_0$ for
exactly one $a\in\Zq{N}$, and acting by $[g^{x}]$ adds $x$ to the exponent.
Theorems~\ref{thm:comp}--\ref{thm:ss} concern only this torsor algebra, so
they can be checked exactly in the exponent model without computing any
isogeny. Our published Python script instantiates key generation, both
branches of~\eqref{eq:verify}, the simulator of Thm.~\ref{thm:zk}, and the
extractor~\eqref{eq:extract} over the exact modulus $N$, for $10^{4}$
independent random instances. All $10^{4}$ pass: the chain~\eqref{eq:chain}
ends at $E_1$; honest transcripts verify under both challenges; simulated
transcripts verify for every challenge; and~\eqref{eq:extract} returns the
planted secret exactly, every time. Figure~\ref{fig:measured} reports the
measured cost of each algebraic operation. We state plainly what this
establishes and what it does not: it is an exact independent confirmation
of the algebra behind Thms.~\ref{thm:comp}--\ref{thm:ke}, and it says
nothing about isogeny running time or hardness, since no isogeny is
computed. \emph{Physical interpretation.} The verification separates the
scheme into two layers that behave like circuit and signal path in a
measurement device. The algebraic layer (exponent bookkeeping, extraction,
simulation) is exact, deterministic, and now independently confirmed; any
failure of a future implementation must therefore originate in the isogeny
layer, which narrows debugging and auditing to one component.
Figure~\ref{fig:measured} additionally shows the algebraic layer costs
microseconds while each isogeny evaluation costs about 40 ms on
CSIDH-512~\cite{csifish}, five orders of magnitude more: in any real
device the isogeny arithmetic is the entire cost, and the extraction
machinery this paper adds is computationally free.

\begin{figure}[t]
\centering
\includegraphics[width=0.66\textwidth]{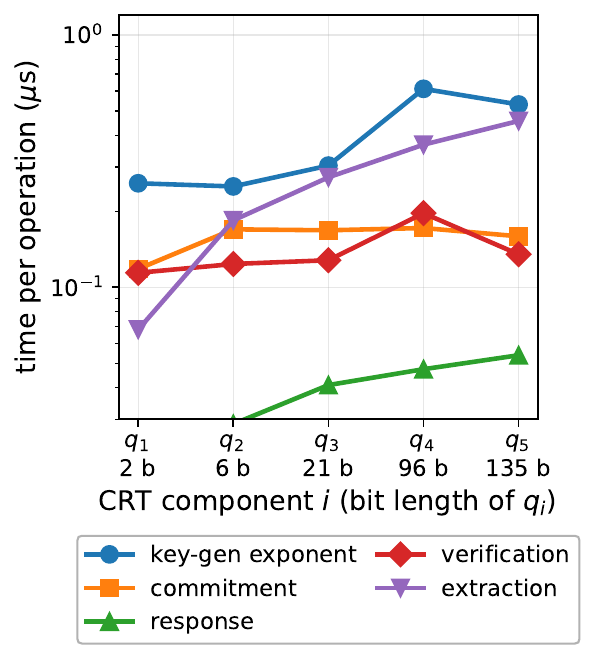}
\caption{Measured cost of the five algebraic operations of
$\Pi_{\mathrm{CRT}}$, per CRT component $q_1,\dots,q_5$ (tick labels give
each component's bit length). Provenance: wall-clock measurement of this
paper's verification script (Python big-integer arithmetic, 3000
repetitions per point) over the exact CSIDH-512 modulus; the algebraic
layer only, excluding isogeny evaluation, which costs about 40 ms per
action~\cite{csifish} and dominates any implementation. The figure shows
that extraction and simulation are sub-microsecond exact formulas, the
claimed qualitative property.}
\label{fig:measured}
\end{figure}

\subsubsection*{Result 2: Security Decomposition}
\begin{proposition}[Cost of publishing hop curves]\label{prop:decomp}
With the hop curves $F_0,\dots,F_k$ public, full key recovery reduces to
$k$ independent vectorization instances in the subgroups
$\langle h_i\rangle$, and a generic meet-in-the-middle adversary recovers
$s$ using $O(\sum_i\sqrt{q_i})=O(\sqrt{q_{\max}})$ action evaluations. For
CSIDH-512 this is about $2^{67.3}$ evaluations, versus about $2^{128.6}$
for the monolithic public key $(E_0,E_1)$.
\end{proposition}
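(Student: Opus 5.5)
The plan has three parts: a reduction, a per-component attack, and the numerics. First I would reduce full key recovery to the individual hops. By the key-generation chain~\eqref{eq:chain}, each public pair $(F_{i-1},F_i)$ satisfies $F_i=[h_i^{\sigma_i}]\act F_{i-1}$ with $\sigma_i\in\Zq{q_i}$. Because the action is free (Sec.~\ref{sec:prelim}), $\sigma_i$ is the unique exponent modulo $q_i$ with this property. Finding it is exactly a vectorization instance in the cyclic subgroup $\langle h_i\rangle$ of order $q_i$. These $k$ instances share no unknowns, so they can be solved independently. Once all $\sigma_i$ are known, $s$ follows by the same ring operations as in~\eqref{eq:extract}: set $s_i=\sigma_i v_i^{-1}\bmod q_i$ and recombine with the idempotents $\varepsilon_i$. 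This recombination costs no action evaluations.

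Next I would give the generic baby-step/giant-step attack on one instance in the torsor picture of Result~1. Set $m=\lceil\sqrt{q_i}\rceil$. First tabulate the curves $[h_i^{a}]\act F_{i-1}$ for $0\le a<m$, which costs $m$ evaluations using the efficient evaluation of Def.~\ref{def:ks}. Then walk from $F_i$ by repeatedly applying $[h_i^{-m}]$, checking each curve against the table. A collision $[h_i^{-bm}]\act F_i=[h_i^{a}]\act F_{i-1}$ gives $\sigma_i=a+bm\bmod q_i$; freeness again guarantees that this value is correct. The walk needs at most about $m$ steps, so one instance costs at most about $2\sqrt{q_i}$ evaluations and about $1.5\sqrt{q_i}$ on average. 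That matches the constants in (S4). Summing over the $k$ hops gives $O(\sum_i\sqrt{q_i})$ evaluations. Since $k$ is fixed and every term is at most $\sqrt{q_{\max}}$, this is $O(\sqrt{q_{\max}})$.

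Finally I would do the numerics. For CSIDH-512 the largest factor has about $134.6$ bits ($q_5\approx 2^{134.6}$), so $\sqrt{q_5}\approx2^{67.3}$. The other components, of at most $96$ bits, add only about $2^{48}$, which is negligible. For the monolithic key $(E_0,E_1)$ the hop curves are absent, so the same generic attack runs in the full group of order $N\approx2^{257.14}$, at a cost of about $\sqrt N\approx 2^{128.6}$.

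I expect the main obstacle to be the comparison rather than the upper bound. The upper bound is routine, but the figure $2^{128.6}$ is only the cost of generic meet-in-the-middle on $(E_0,E_1)$, not a proven lower bound. A sharper adversary could also use a Pohlig--Hellman-style split on $(E_0,E_1)$ alone. I would therefore have to argue that, without the hop curves, the attacker cannot project $E_1$ onto a component $\langle h_i\rangle$. In Schnorr groups this projection is done by exponentiating to $u_i$, but for a group action no analogous operation on curves is available, in the spirit of Prop.~\ref{prop:obstruction}. I would state the $2^{128.6}$ figure explicitly as the best known generic cost under that assumption. The claim actually proved would then be that publishing the hop curves enables exactly the Pohlig--Hellman decomposition that was otherwise unavailable.
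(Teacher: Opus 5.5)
Your proposal is correct and follows the paper's proof. The chain reduces key recovery to $k$ independent vectorization instances in $\langle h_i\rangle$; each is solved by baby-step/giant-step in $O(\sqrt{q_i})$ evaluations and the results are recombined by CRT, giving exponents $\tfrac12\log_2 q_{\max}\approx 67.3$ and $\tfrac12\log_2 N\approx 128.6$. You only swap which endpoint is tabulated (the paper tabulates $[h_i^{-a}]\act F_i$ and steps from $F_{i-1}$), which is immaterial. Your caveat that $2^{128.6}$ is the best known generic cost rather than a proven lower bound matches the paper, which makes the same point in its remark after the proof rather than in the proof itself.
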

\begin{proof}
Each pair $(F_{i-1},F_i)$ satisfies $F_i=[h_i^{\sigma_i}]\act F_{i-1}$ with
$\sigma_i\in\Zq{q_i}$. Write $\sigma_i=a+b\lceil\sqrt{q_i}\rceil$ with
$0\le a,b<\lceil\sqrt{q_i}\rceil$. The adversary tabulates
$[h_i^{-a}]\act F_i$ for all $a$ and matches against
$[h_i^{\,b\lceil\sqrt{q_i}\rceil}]\act F_{i-1}$ for all $b$; a collision
reveals $\sigma_i$, hence $s_i$, and CRT recombines $s$. The cost per
component is $O(\sqrt{q_i})$ evaluations and the sum is dominated by
$q_{\max}$. The exponents are $\tfrac12\log_2 N=128.57$ and
$\tfrac12\log_2 q_{\max}=67.3$.
\end{proof}

Splitting one lock into $k$ smaller
locks lets a burglar pick each lock separately: total effort is set by the
hardest single lock, not by their product. Publishing the hop curves is
what hands the burglar the separated locks, because each adjacent pair of
curves isolates one component. The monolithic public key keeps the locks
fused; no method is known to separate the components from $(E_0,E_1)$
alone, precisely because a group action admits no homomorphism the
attacker could apply.

The three small factors ($3\cdot37\cdot1407181$, together $27.22$ bits) are
recoverable outright and carry no entropy; effective secret entropy is
$229.9$ bits, and the classical floor is set by Prop.~\ref{prop:decomp} at
about $2^{67}$ evaluations. The CSIDH-512 instantiation therefore does not
reach a 128-bit classical target. The construction should be read as
correct and complete in structure today, and quantitatively secure on
future known-structure class groups whose prime factors are individually
large (at least about 256 bits each). Figure~\ref{fig:surface} shows the
full trade-off surface: the attack exponent as a function of the class
number size and the number of equal-size components. Two mitigations exist
at CSIDH-512: prove only the two large components ($k_{\mathrm{eff}}=2$,
treating the small ones as public), which removes dead weight but not the
$\sqrt{q_{\max}}$ floor; or keep interior hop curves unpublished, which
breaks the verification equation~\eqref{eq:verify} and is an open
protocol-design problem. We record this trade-off rather than hide it: the
CRT structure that enables algebraic extraction is exactly the structure a
decomposition adversary exploits.

The reading of Table~\ref{tab:sota} is direct: prior schemes are smaller,
and what this work adds is a different extraction model, with no rewinding,
QROM validity, and only additive losses (Sec.~\ref{sec:qrom}). Whether the
trade is worthwhile depends on the forking cost at the blind-signature
layer, which we leave open. Code-based blind
signatures face the same rewinding-loss issue~\cite{blazy}, so an
online-extractable group-action protocol is of interest beyond isogenies.
Quantum security of every CSIDH-512 scheme, ours included, is bounded by
Kuperberg-style sieves~\cite{peikert}; Prop.~\ref{prop:decomp} additionally
lowers the classical side for our variant. Both facts argue for larger
future parameter sets.

\begin{figure}[t]
\centering
\includegraphics[width=0.8\textwidth]{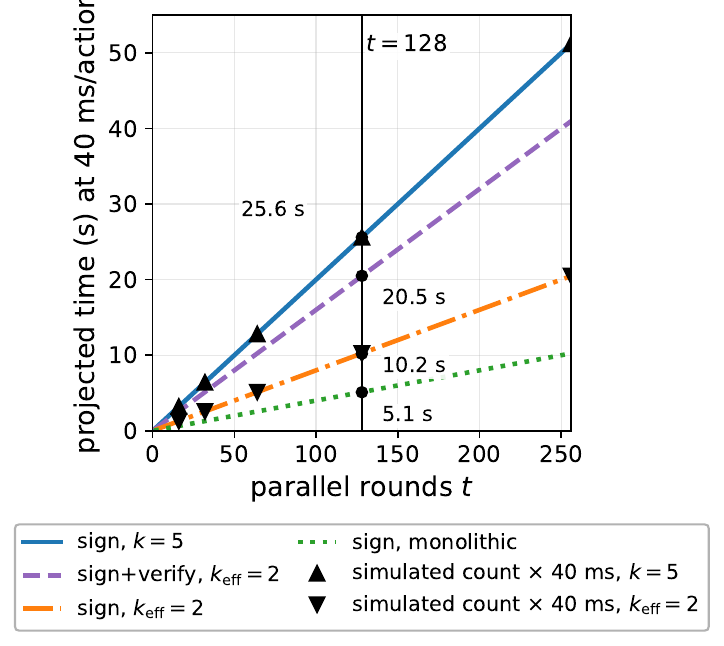}
\caption{Projected time versus rounds $t$, counting one group action per
commitment or verification per hop per round, at the published rate of
40 ms per CSIDH-512 action~\cite{csifish}. Provenance: projection; curve heights are instrumented action counts
from the released simulator (triangular markers) multiplied by the cited
published measurement, not a benchmark of this scheme. Marked values at $t=128$: 25.6 s to sign with
$k=5$; 10.2 s to sign and 20.5 s for a full sign-plus-verify round trip
with $k_{\mathrm{eff}}=2$; 5.1 s to sign monolithically. The CRT variant
costs a factor $k_{\mathrm{eff}}$ in isogeny evaluations, the dominant
cost.}
\label{fig:time}
\end{figure}

\section{QROM Compilation}\label{sec:qrom}
This section explains how the interactive protocol becomes a
non-interactive proof and signature, and why the security argument that
results is quantum-safe in the sense that matters: it never rewinds the
adversary. The content is an application of published theorems, and the
section states exactly which property of our protocol (binary responses
with algebraic extraction) makes the application possible.

Each round has exactly two admissible responses, so Unruh's
transform~\cite{unruh} applies with minimal overhead: for every round the
prover publishes random-oracle commitments to both responses and opens the
one selected by the Fiat-Shamir challenge. Unruh's theorem then gives a
non-interactive zero-knowledge proof of knowledge in the QROM with online
extraction: the extractor reads the adversary's oracle queries and
applies~\eqref{eq:extract}. No rewinding occurs and no forking loss is
incurred.

\begin{proposition}[Instantiation of published theorems]\label{prop:qrom}
Under the vectorization assumption for known-structure instances, the Unruh
compilation of $\Pi_{\mathrm{CRT}}$ with $t$ rounds per hop is a
simulation-sound NIZK proof of knowledge in the QROM with online extraction
and soundness error $2^{-t}$ plus negligible terms; the derived signature
is sEUF-CMA in the QROM with a bound whose dominant loss is the additive
online-extraction terms of~\cite{unruh}, not a multiplicative forking loss.
The Fiat-Shamir analysis of~\cite{dfms} covers the plain compilation.
\end{proposition}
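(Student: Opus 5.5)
The plan is to check, one at a time, the hypotheses of Unruh's theorem~\cite{unruh} for $\Pi_{\mathrm{CRT}}$ run as a single $\Sigma$-protocol: $kt$ parallel rounds, statement $(F_0,\dots,F_k)$, and the AND-relation $\bigwedge_i R_i$. Unruh's online-extractable NIZK compiler requires four things. First, completeness, which is Thm.~\ref{thm:comp}. Second, special soundness, which is Thm.~\ref{thm:ss}, with its extractor applied componentwise to every hop. Third, special HVZK, which is Thm.~\ref{thm:zk}; it lifts to $kt$ parallel rounds because each round is simulated independently from its own challenge bit. Fourth, \emph{unique responses}: for a fixed commitment $T$ and challenge $c$, at most one $z$ is accepted. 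I would prove the fourth property directly from freeness of the action. Suppose $[h_i^{z}]\act F_{i-1+c}=[h_i^{z'}]\act F_{i-1+c}$ with $z,z'\in\Zq{q_i}$. Freeness gives $h_i^{z-z'}=1$, and since $h_i$ has order exactly $q_i$ this forces $z=z'$. Uniqueness is what yields strong (simulation-sound) extraction and, later, sEUF-CMA rather than plain EUF-CMA.

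With the hypotheses in place, the compiled proof has a fixed shape. For every round the prover commits to $T$, publishes $G(z_0),G(z_1)$ for a length-preserving random oracle $G$, and the challenge bits are derived as $H(\text{statement},\{T\},\{G(z_b)\})$. Unruh's theorem then gives an online extractor: it simulates $G$ as an efficiently invertible permutation, inverts both committed responses of every round, and for any round and hop where both inverses are valid it applies~\eqref{eq:extract}. Soundness error $2^{-t}$ plus negligible terms follows from two facts. An adversary whose $G$-preimages fail to give a valid pair in some hop must have guessed all $t$ challenge bits of that hop. Unruh's bound turns the $q_H$ quantum queries into an additive term of order $q_H^2 2^{-t}$ plus $O(q_G^3 2^{-\ell})$, where $\ell$ is the output length of $G$; all terms are additive in the reduction, none multiplicative.

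For the signature I would use the standard route: include the message in the $H$-input, giving a signature from a simulation-sound online-extractable NIZK of a hard relation. Hardness of $\bigwedge_i R_i$ given the public hop curves is exactly the vectorization assumption for known-structure instances, read, as Prop.~\ref{prop:decomp} requires, at the strength of the largest component. In the reduction, the signing oracle is replaced by the HVZK simulator with $H$ reprogrammed; a forgery's witness is extracted online, which contradicts vectorization. Because responses are unique, a fresh signature on an already signed message must contain a new $H$-query, so extraction still applies and sEUF-CMA follows. The plain Fiat-Shamir statement is just an invocation of~\cite{dfms}: that result needs only special soundness, HVZK and unique responses, all of which are established above, and it yields a measure-and-reprogram bound with a polynomial loss.

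I expect the main obstacle to be matching the precise form of Unruh's hypotheses. His theorem is stated for $\Sigma$-protocols whose challenge space is polynomial per round, whose HVZK is perfect or statistical, and whose response space admits an injective or length-preserving encoding. Our responses lie in $\Zq{q_i}$ with $q_i$ differing between hops, so I would first fix an encoding of each $z$ into a fixed bit length. I would then check that the extractor's failure event, in which the adversary queries $G$ on an invalid response, is bounded by the additive $q_G^3 2^{-\ell}$ term and does not interact badly with the per-hop AND structure.
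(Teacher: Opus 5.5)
Your proposal follows the paper's route: the paper gives no argument beyond invoking Unruh's transform (and \cite{dfms} for plain Fiat--Shamir), enabled by the two admissible responses per round and by Thms.~\ref{thm:ss} and~\ref{thm:ke}, and you make the same invocation while spelling out what the paper leaves implicit, notably unique responses (needed for simulation soundness and sEUF-CMA) and the per-hop failure analysis behind the $2^{-t}$ term. One hypothesis is missing from your list of four (the paper leaves it implicit in the phrase ``random-oracle commitments''): the $G$-values of the unopened responses must hide them, and your unsalted $G(z_0),G(z_1)$ does not ensure this---for the component $q_i=3$ anyone inverts $G$ by trying three inputs, and anyone knowing $\sigma_i$ computes $z_{1-c}=z_c\pm\sigma_i$ and checks it against $G$---so commit as $G(z_b\,\|\,\rho_b)$ with fresh randomness $\rho_b$ before claiming zero-knowledge or replacing the signing oracle by the simulator, a change that leaves unique responses and online extraction intact.
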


A rewinding proof runs the adversary,
saves its state, and reruns it from the save point, which is impossible to
justify against a quantum adversary whose state cannot be cloned. Online
extraction replaces the rerun with a wiretap: the reduction observes the
adversary's random-oracle queries as they happen and reads the secret out
of them using~\eqref{eq:extract}. The adversary is executed once, in
real time, exactly as in a real attack, which is why the resulting bound
has no multiplicative loss. This proposition applies published theorems
to our protocol; the application is enabled by Thms.~\ref{thm:ss}
and~\ref{thm:ke}, and we claim no new QROM technique.

\section{Conclusion}\label{sec:conc}

On a class group of known structure, CRT decomposition yields a
$\Sigma$-protocol whose extraction is a closed-form ring computation, and
Unruh's transform lifts it to a QROM proof of knowledge with online
extraction; on CSIDH-512 the scheme is fully determined and its algebraic
layer is machine-verified. The same decomposition cannot shorten the
protocol, increases cost by known factors, and caps classical security at
$\sqrt{q_{\max}}$ evaluations, so deployment awaits known-structure
class groups with only large prime factors. Building a blind signature on
this proof of knowledge, and reducing its unforgeability tightly to the
endomorphism ring problem, is left to future work.

\bibliography{refs}

\end{document}